\newtheorem{definition}{Definition}
\newtheorem{theorem}[definition]{Theorem}
\newtheorem{proposition}[definition]{Proposition}
\newtheorem{lemma}[definition]{Lemma}
\newtheorem{remark}[definition]{Remark}
\newcommand{\qsp}[2]{\,\ensuremath{\raise.5ex\hbox{$#1$}\big\slash\raise-.5ex\hbox{$#2$}}} 
\newcommand{\pard}[2]{\frac{\delta#1}{\delta#2}}
\newcommand{\intl}{\int\limits}
\newcommand{\Wedge}[1]{{\textstyle \bigwedge^{#1}}}
\title{BV-equivalence between triadic gravity and BF theory in three dimensions}
\author{A. S. Cattaneo}
\address{institut f\"ur Mathematik, Winterthurerstrasse 190, 8057 Z\"urich, Switzerland}
\email{cattaneo@math.uzh.ch}
\author{M. Schiavina}
\address{Department of Mathematics, University of California, Berkeley, 920 Evans Hall, Berkeley CA 94720-3840 USA}
\email{michele.schiavina@berkeley.edu}
\author{I. Selliah}
\address{institut f\"ur Mathematik, Winterthurerstrasse 190, 8057 Z\"urich, Switzerland}
\email{iswaryaa.selliah@uzh.ch}
\thanks{This research was (partly) supported by the NCCR SwissMAP, funded by the Swiss National Science Foundation and by the COST Action MP1405 QSPACE, supported by COST (European Cooperation in Science and Technology).
A. S. C. acknowledges partial support of SNF grant No. 200020\_172498$\slash$1. M. S. is supported by SNF grant No. P2ZHP2\_164999. }
\begin{document}

\maketitle

\begin{abstract}
The triadic description of General Relativity in three dimensions is known to be a BF theory. Diffeomorphisms, as symmetries, are easily recovered on shell from the symmetries of BF theory. This note describes an explicit off-shell BV symplectomorphism between the BV versions of the two theories, each endowed with their natural symmetries.
\end{abstract}

\section*{Introduction}
A $d$-dimensional (classical) field theory is the (local) assignment of a \emph{space of fields} ${F}_M$ and an \emph{action functional} $S^0_M\colon {F}_M \longrightarrow \mathbb{R}$ to any $d$-dimensional manifold $M$ (possibly with boundary), and most of its relevant physical information is contained in the critical locus of the action $EL_M$. 

Under this perspective, any two assignments of a space of fields and a functional that have diffeomorphic critical loci are candidates to be considered \emph{equivalent} from a physical standpoint.

However, when the theory admits an additional piece of data, a symmetry distribution $D_M$ in $TF_M$, such that any of its sections annihilates $S_M$, the picture becomes more involved. Since configurations that are related by a symmetry transformation are physically undistinguishable, the true space one looks at is the moduli space of solutions up to said equivalence: $\qsp{EL}{D_M}$. Such space is generically \emph{bad} - e.g. a stack of some sort - and one goes about the algebraic geometry of its description resolving the quotient by means of the Chevalley-Eilenberg-Koszul-Tate complex. 

This construction goes under the name of BV formalism and outputs the dg-Poisson-algebra of functions over a (-1)-symplectic graded manifold, seen as an \emph{extended} space of fields $\mathcal{F}$ (in the sense that its degree-zero part is the original space of fields), and a vector field $Q$ of degree 1 (the differential of the dg-algebra) encoding the symmetries. In addition, $Q$ is the Hamiltonian vector field for a function $S_M$, the \emph{extended} BV action\footnote{Extended in the same sense used before.} \cite{BV81, Stash,stash97}.

In this dg-setting one can easily make sense of a notion of an equivalence of BV data obtained, say, as alternative extension of the same classical data, or as completely independent constructions, and since the cohomology in degree zero of the (classical) BV complex is, by construction, a presentation of the algebra of functions over $\qsp{EL_M}{D_M}$, comparison of the cohomologies will provide a natural language to understand equivalence of theories.

There are situations, though, where something stronger can be said about two theories, when they are equivalent in a much stricter sense and an actual BV-symplectomorphism between two BV complexes (viewed as functions on graded symplectic manifolds) can be found, such that the action functionals are pulled back to one another.

In this paper we will show that this is indeed the case for General Relativity in 3 dimensions in the triadic formalism and nonabelian BF theory, where an explicit canonical transformation will be given in terms of generating functions. As a matter of fact, it is well known that the triadic version of 3D gravity is classically a BF theory. The geometric symmetries for GR contain diffeomorphisms. On shell they can be recovered from the symmetries of BF theory \cite{Witten}. We extend this result off-shell in the BV formalism. 

The formulation of GR we will be using here is often (ambiguosly) referred to it as Palatini--Cartan (PC) theory \cite{Palatini,Cartan}. Its analysis in 4 dimensions is carried out in \cite{CS4}, where the ambiguity in the nomenclature is also explained. For more details on gravity in 2+1 dimensions, see \cite{Carl} and references therein.

\section{Preliminaries}
In this section we will review the basics of the BV formalism, the triad Palatini--Cartan formulation of 3d - General Relativity as a gauge theory, of its counterpart BF theory and how both can be extended to yield BV data.

\subsection{BV formalism}
We are interested in extending a classical field theory in the cohomological framework developed by Batalin and Vilkovisky \cite{BV81}. The construction is aimed at replacing the reduced critical locus (solutions of the Euler-Lagrange equations modulo gauge equivalence) with a smooth chain complex, seen as a graded manifold. The gauge fixing procedure is interpreted as a choice of a suitable Lagrangian submanifold of this extended space; however, in the present paper we shall not be concerned with quantisation. 

Here are the main definitions we will need.
\begin{definition}
A \emph{BV manifold} is a 4-tuple $(\mathcal{F},\Omega, S, Q)$, consisting of a $(-1)$-symplectic graded manifold $(\mathcal{F},\Omega)$, a \emph{cohomological} vector field $Q\in\mathfrak{X}[1](\mathcal{F})$, i.e. such that $[Q,Q]=0$ and a \emph{BV action}, a function $S\colon \mathcal{F}\longrightarrow \mathbb{R}$ such that $\iota_Q\Omega = \delta S$.
\end{definition}

The main example shows up in field theory, where we may construct a BV manifold for each input spacetime $M$, as follows:
\begin{definition}
A \emph{$d$-dimensional BV theory} is the assignment to every closed $d$-dimensional manifold $M$ of a BV manifold $(\mathcal{F}_M,\Omega_M,Q_M,S_M)$, given in terms of local data.
\end{definition}

\begin{remark}
Observe that when a boundary is present the equation for the Hamiltonian function $\iota_Q\Omega = \delta S$ is likely to be spoiled by a boundary term. This is handled in the BV-BFV formalism of Cattaneo, Mn\"ev and Reshetikhin \cite{CMR1}. However, because the whole construction is local, it still makes sense to consider the BV-theory as if there were no boundary.
\end{remark}

Usually, the BV theory is constructed as an extension of some classical field theory with gauge symmetry, i.e. the assignement of some space of fields $F_M$ a local action functional $S_M^0$ on it and a symmetry distribution $D_M$, provided that the associated distribution is involutive on-shell. Most of the examples that show up in physics are such that the symmetry distribution is given by the action of a Lie algebra on the space of classical fields, e.g. Yang-Mills theory, Chern-Simons theory, but also BF theory and General Relativity \cite{CMR1,CS1,CS2}. We call these theories BRST-like, after the BRST construction \cite{BRST}, and the associated BV-extension is \emph{minimal} in some sense as determined by \cite{BV81}:
\begin{theorem}\label{minimalBV}
Let $(F_M, S_M^0,D_M)$ define a classical field theory with gauge symmetry. If the distribution $D_M$ comes from a Lie algebra action, the functional 
$$S_M=S^0_M + \langle \Phi^\dag,Q\Phi\rangle$$ 
on the space of fields $\mathcal{F}_M=T^*[-1]D_M[1]$, where $\Phi$ is a multiplet of fields in $D_M[1]$ and $\Phi^\dag$ denotes the corresponding multiplet of dual fields, yields a BV theory together with $Q$, the degree $1$ vector field encoding the symmetries of $D_M$, .
\end{theorem}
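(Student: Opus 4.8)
The plan is to verify, one at a time, the three defining properties of a BV manifold for the tuple $(\mathcal{F}_M,\Omega_M,S_M,Q)$, treating the classical master equation as the crux. First I would record the symplectic data: the shifted cotangent bundle $\mathcal{F}_M = T^*[-1]D_M[1]$ carries its canonical exact $(-1)$-symplectic form $\Omega_M = \delta\Phi^\dag\,\delta\Phi$, understood as the de Rham differential of the tautological $1$-form. This is non-degenerate of internal degree $-1$ simply because the cotangent fibres are shifted by $-1$ relative to the base, so $\Phi^\dag$ and $\Phi$ are dually paired with a net degree shift of $-1$. This step is formal and amounts to degree bookkeeping.

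Next I would make the vector field $Q$ precise. On the base $D_M[1]$, writing the components of $\Phi$ as the classical fields $\phi$ and the ghosts $c$ valued in $\mathfrak{g}[1]$ (where $\mathfrak{g}$ is the Lie algebra whose action $\rho\colon\mathfrak{g}\to\mathfrak{X}(F_M)$ generates $D_M$), $Q$ is the Chevalley--Eilenberg differential, acting as $Q\phi = \rho(c)\phi$ and $Qc = -\tfrac{1}{2}[c,c]$. I then take $Q$ on all of $\mathcal{F}_M$ to be the Hamiltonian vector field of $S_M$, so that $\iota_Q\Omega_M = \delta S_M$ holds by construction; this is precisely the equation defining the BV action, and it is what fixes the action of $Q$ on the antifields $\Phi^\dag$ (a Koszul--Tate-type term). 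The content still to be proved is then concentrated in the single identity $[Q,Q]=0$.

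The heart of the argument is therefore to establish $[Q,Q]=0$, which — since $\Omega_M$ is $(-1)$-symplectic and $Q$ is its Hamiltonian vector field for $S_M$ — is equivalent to the classical master equation $(S_M,S_M)=0$ for the induced odd (BV) bracket. Expanding $S_M = S^0_M + \langle\Phi^\dag,Q\Phi\rangle$ and using bilinearity, the bracket splits into three pieces. The term $(S^0_M,S^0_M)$ vanishes because $S^0_M$ depends only on the base coordinates while the BV bracket pairs base against fibre. The cross term $2(S^0_M,\langle\Phi^\dag,Q\Phi\rangle)$ reduces to $Q S^0_M$, which vanishes by the hypothesis that $D_M$ is a symmetry distribution annihilating the classical action. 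Finally the pure-antifield term $(\langle\Phi^\dag,Q\Phi\rangle,\langle\Phi^\dag,Q\Phi\rangle)$ reduces to the pairing $\langle\Phi^\dag,Q^2\Phi\rangle$, which vanishes precisely because the Chevalley--Eilenberg differential squares to zero, i.e. because of the Jacobi identity for $\mathfrak{g}$ together with the representation property of $\rho$.

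I expect the main obstacle to be the strictness of closure rather than any deep computation. The surrounding discussion permits symmetry distributions that are merely involutive on shell, for which $Q^2=0$ would hold only up to terms proportional to the equations of motion, and the BV action would then require higher-order corrections in the antifields (an $L_\infty$-type tail). The force of the Lie-algebra hypothesis is exactly that involutivity is off-shell and strict, so that $Q^2\Phi=0$ identically and the quadratic ansatz $\langle\Phi^\dag,Q\Phi\rangle$ needs no correction; this is the sense in which the extension is minimal. Secondary points of care are the sign and degree conventions in the odd bracket, and the tacit restriction to the closed case flagged in the preceding Remark, so that $\iota_Q\Omega_M=\delta S_M$ is not spoiled by boundary contributions.
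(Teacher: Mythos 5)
Your proposal is correct, but note that the paper itself offers no proof of this theorem: it is stated as a known result, with the minimality of the extension attributed to the original Batalin--Vilkovisky paper \cite{BV81}, and the text moves on immediately to the definition of strong equivalence. So there is no internal argument to compare against; what you have written is the standard argument, and it is sound. Your decomposition of the classical master equation $(S_M,S_M)=0$ into three pieces --- the pure-classical term vanishing because $S^0_M$ is antifield-independent, the cross term vanishing because it equals $Q S^0_M$ and $D_M$ annihilates the classical action, and the pure-antifield term reducing to $\langle\Phi^\dag,Q^2\Phi\rangle$ and vanishing by the Jacobi identity together with the representation property --- is exactly where the Lie-algebra hypothesis enters, and your closing remark correctly identifies why the theorem fails (or rather, requires higher-order antifield corrections) when the distribution is only involutive on shell. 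One small point worth making explicit: $[Q,Q]=0$ is equivalent to $(S_M,S_M)$ being \emph{constant}, not zero; since $(S_M,S_M)$ has ghost degree $1$ and is a local functional, the constant must vanish, so your chain of equivalences does close.
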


The definition we will need to state the main result of this work is the following:
\begin{definition}
A \emph{strong equivalence} between the BV-theories 
$$\mathfrak{F}_M^{(1|2)}\coloneqq \left(\mathcal{F}_M^{(1|2)},\Omega_M^{(1|2)},Q_M^{(1|2)},S_M^{(1|2)}\right) $$
is a graded symplectomorphism $\Phi\colon \left(\mathcal{F}_M^{(1)},\Omega_M^{(1)}\right) \longrightarrow \left(\mathcal{F}_M^{(2)},\Omega_M^{(2)}\right)$ preserving the BV-action, i.e. $\Phi^*S_{M}^{(2)}=S_M^{(1)}$.
\end{definition}

\subsection{3d General Relativity}
Consider $P\longrightarrow M$ an $SO(2,1)$ bundle over an orientable manifold $M$ (for simplicity we will assume $\partial M=0$) and let $\mathcal{W}\longrightarrow M$ be an associated vector bundle endowed with a smooth fiberwise Minkowski metric $(W,\eta)$ and with an orientation. A co-frame field, sometimes called a triad or a \emph{dreibein}, is a bundle isomorphism $e\colon TM\longrightarrow \mathcal{W}$ covering the identity. We will thus consider $e\in\Omega_{nd}^1(M,\mathcal{W})$ (the subscript \emph{nd} stands for nondegenerate) and use the isomorphism $\mathfrak{so}(2,1)\simeq \Wedge{2}\mathcal{W}\simeq\mathcal{W}$ induced by the metric and the internal hodge dual. Given a connection $\omega\in\mathcal{A}_P$ on $P$, its curvature $F_\omega$ will be regarded as a $\Wedge{2}$-valued two-form one-form.

Denote by ${F}_{GR}\coloneqq \Omega_{nd}^1(M,\mathcal{W})\times \mathcal {A}_P$ the space of physical fields and consider the action functional
\begin{equation}
S^0_{GR}(\Lambda)=\intl_{M} \mathrm{Tr} [e\wedge F_\omega + \frac{\Lambda}{3} e^3]
\end{equation}
with the trace denoting the pairing with volume form in $\Wedge{3}W$ and $\Lambda\in \mathbb{R}$ the cosmological constant. The Euler--Lagrange equations are given by
\begin{subequations}\begin{align}\label{Einstein}
F_\omega = 0 \\\label{halfshell}
d_\omega e = 0
\end{align}\end{subequations}
and it is a well-known result that solving \eqref{halfshell} to yield $\omega=\omega(e)$ one obtains the Levi--Civita connection for the metric $g=e^*\eta$ and \eqref{Einstein} then reduces to the Einstein equations for $g$. Observe that with this redefinition, $S^0_{GR}$ reduces to the standard Einstein--Hilbert action functional $S_{EH}=\int_M \sqrt{-|g|}R[g]$.

The symmetries of this theory are given by  the action of (infinitesimal) diffeomorfisms, and internal $SO(2,1)$ gauge transformations. Adapting to three dimensions the costruction of \cite{CS2} we extend the classical data to a BV manifold by declaring the following
\begin{align}
Q_{GR}(e) = &  L^\omega_\xi e + [\chi,e] \\
Q_{GR}(\omega) = & \iota_\xi F_\omega + d_\omega \chi\\
Q_{GR}(\xi) = & L_\xi\xi \\ 
Q_{GR}(\chi) = & \frac12( [\chi,\chi] - \iota_\xi \iota_\xi F_\omega).
\end{align} 
where $L_\xi^A\coloneqq [\iota_\xi,d_A]$. This defines the cohomological vector field $Q_{GR}$ of degree 1, with $\xi\in\mathfrak{X}[1](M)$ and $\chi\in\Omega(M,\mathrm{ad}P)$ the \emph{ghost} fields, the space of BV-fields is 
\begin{equation}
\mathcal{F}_{GR} = T^*[-1]\left(F_{GR}\times \mathfrak{X}[1](M) \times \Omega(M,\mathrm{ad}P)\right)
\end{equation}
If we decorate cotangent fields with a dagger, we can easily verify that the corresponding BV-action is
\begin{multline}
S_{GR}(\Lambda) = S^0_{GR}(\Lambda) + \intl_{M} \langle e^\dag, Q_{GR}(e)\rangle + \langle A^\dag, Q_{GR}(A)\rangle +\langle \chi^\dag, Q_{GR}(\chi)\rangle + \langle \xi^\dag, Q_{GR}(\xi)\rangle=\\
=\intl_{M} \mathrm{Tr} [e\wedge F_\omega + e^\dag (L^\omega_\xi e + [\chi,e]) + A^\dag(\iota_\xi F_\omega + d_\omega \chi) + \frac12\chi^\dag( [\chi,\chi] - \iota_\xi \iota_\xi F_\omega ) + \frac12\iota_{[\xi,\xi]}\xi^\dag]
\end{multline}

It follows immediately that the 4-tuple $\mathfrak{F}_{GR}\coloneqq(\mathcal{F}_{GR},\Omega_{GR}, Q_{GR},S_{GR})$ defines a BV-theory \cite{CS2}.

\subsection{BF theory}
Let us assume again that we are given the $SO(2,1)$ principal bundle $P\longrightarrow M$ and the associated oriented \emph{Minkowski bundle} $\mathcal{W}\longrightarrow M$. We want to define BF theory as a classical field theory, so we consider the space of fields $F_{BF}\coloneqq \Omega^1(M,\Wedge{2}\mathcal{W}^*)\times \mathcal{A}_P\ni(B,A)$ together with the action functional
\begin{equation}
S_{BF}^0\coloneqq \intl_{M} \langle B, F_A\rangle \equiv \intl_{M} \mathrm{Tr}[B F_A]
\end{equation}
where we identify $\Wedge{2}\mathcal{W}^*$ with $\mathcal{W}$ and use the volume form in $\Wedge{3}W$.

The symmetries of this action are given by two sets of transformations on the fields, one accounting for the internal $SO(2,1)$ gauge symmetry, while the other stems from the fact that $B$ can be shifted by a covariantly-exact form. In other words, we can construct the degree-1 vector field
\begin{align}
Q_{BF}(B) = &  d_A\tau + [c,B] \\
Q_{BF}(A) = &  d_A c\\
Q_{BF}(\tau) = & [c,\tau] \\ 
Q_{BF}(c) = &  \frac12 [c,c]
\end{align} 
over the space of BV fields
\begin{equation}
\mathcal{F}_{BF}\coloneqq T^*[-1]\left(F_{BF}\times \Omega^0(M,\mathcal{W})\times \Omega^0(M,\Wedge{2}\mathcal{W})\right)
\end{equation}
with $\tau\in\Omega^0(M,\mathcal{W}), c\in\Omega^0(M,\Wedge{2}\mathcal{W})$, and the BV extended action then reads
\begin{equation}
S_{BF}=\intl_{M} \mathrm{Tr}\left[BF_A + B^\dag(d_A\tau + [c,B]) + A^\dag d_Ac + \tau^\dag[c,\tau] + \frac12 c^\dag[c,c] \right]
\end{equation}

Summarising, it is a known result that the 4-tuple $\mathfrak{F}_{BF}\coloneqq(\mathcal{F}_{BF},\Omega_{BF},Q_{BF},S_{BF})$ defines a BV-theory.

\begin{remark}[Classical equivalence]
The first obvious observation that one could make at this point is that when looking at the degree-zero part of the BV-manifold, i.e. at the non-extended theory, there is an obvious map between the open submanifold of ${F}_{BF}$ consisting of non degenerate vector valued one forms and connections, to $F_{GR}$. That map is the identity map. 

One can also observe that the two sets of symmetries coincide on shell \cite{Witten}, as is easily verified as follows: choose $\tau=-\iota_\xi B$ and $c=-\iota_\xi A$, then 
$$d_A\tau = d_A\iota_\xi B =L_\xi^A B - \iota_\xi d_A B\approx L_\xi B$$
together with
$$d_Ac=-d_A\iota_\xi A = L_\xi^A A - \iota_\xi F_A \approx L_\xi A$$
where the symbol $\approx$ means equality on the critical locus, i.e. \emph{on-shell}.

In the rest of the paper we show how to extend the correspondence between the symmetries also off shell in terms of the BV formalism.
\end{remark}

Observe that it is possible to add a cosmological term to BF theory as well by adding $\frac{\Lambda}{3}\int_{M} \mathrm{Tr} B^3$ to $S^0_{BF}$. In this case the theory admits an additional symmetry, namely $\delta_\Lambda A = \Lambda [B,\tau]\equiv \Lambda B\wedge\tau$. In the BV formalism this has to be complemented by additional higher terms for ghosts and antifields in order to yield a solution to the Classical Mater Equation (namely the term $c^\dagger \tau\wedge\tau$). Such solution is summarised with the introduction of \emph{superfields}, i.e. the inhomogeneous forms $\mathbb{B}=\tau + B +A^\dag + c^\dag $ and $\mathbb{A}= c + A + B^\dag + \tau ^\dag$. The BV-extended BF action in the superfield formalism then reads

\begin{equation}
S_{BF}(\Lambda) = \intl_M \mathrm{Tr}\left[\mathbb{B}F_{\mathbb{A}} + \frac{\Lambda}{3}\mathbb{B}^3\right]
\end{equation}

\section{Strong equivalence}
In this section we will prove that there is a strong equivalence between the BV theories $\mathfrak{F}_{BF}$ and $\mathfrak{F}_{GR}$. In order to do this we will adapt to three dimensions the strong equivalnce between the Palatini--Cartan BV formulation of gravity presented in \cite{CS2} and the version that was suggested by Piguet \cite{Piguet}. The main difference between the two BV theories is that the latter involves non-covariant expressions and non-global fields, but they are essentially equivalent up to (symplectic) field redefinitions.

\subsection{Non covariant BV teory}
Consider the assignment
\begin{equation}\label{Qpig}{
\begin{aligned}
&\mathrm{s}\,e'=L_{\xi'} e' + [\chi', e']\\
&\mathrm{s}\,\omega'= L_{\xi'} \omega' + d_{\omega'}\chi'\\
&\mathrm{s}\,\xi'=\frac{1}{2}[\xi',\xi']\\
&\mathrm{s}\,\chi'=L_{\xi'} \chi' + \frac{1}{2}[\chi',\chi']
\end{aligned}}
\end{equation}
defining a vector field $\mathrm{s}$ over 
$${F}_{PP}\coloneqq \Omega^1(M, \mathcal{W})\times\Omega^1(M,\Wedge{2}\mathcal{W})\times \mathfrak{X}[1](M) \times \Omega^0[1](M,\mathrm{ad}P)\ni(e',\omega',\xi',\chi')$$
which is cohomological, with $\xi$ a vector field with ghost number $\mathrm{gh}(\xi)=1$ and $\theta$ a function with values in $\Lambda^2 V$ and ghost number $\mathrm{gh}(\theta)=1$. The cotangent lift $\check{s}$ of $s$ to $\mathcal{F}_{PP}\coloneqq T^*[-1]{F}_{PP}$ is a cohomological vector field, that defines a BV-manifold together with the BV-extension of the Palatini action by $s$. We will denote such extension by 
\begin{multline}S_{PP}(\Lambda)=S^0_{GR}(\Lambda) + \intl_{M}\mathrm{Tr}\left\{e^\dag{}'\left(L_{\xi'} e' + [\chi', e']\right) + \omega^\dag{}'\left(L_{\xi'} \omega' + d_{\omega'}\chi'\right)\right\} + \\
+\mathrm{Tr}\left\{\chi^\dag{}'\left(L_{\xi'} \chi' + \frac{1}{2}[\chi',\chi']\right)\right\} + \frac12\iota_{[\chi',\chi']}\xi^\dag{}'.
\end{multline}
The subscript PP stands for Palatini-Piguet.

\begin{proposition}[\cite{CS2}]
The BV theory $\mathfrak{F}_{PP}=(\mathcal{F}_{PP},\Omega_{PP}, S_{PP},\check{s})$ is strongly equivalent to $\mathfrak{F}_{GR}$.
\end{proposition}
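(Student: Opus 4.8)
The plan is to realise the strong equivalence as the cotangent lift of an explicit, field-dependent redefinition of the diffeomorphism ghost, exactly mirroring the four-dimensional argument of \cite{CS2}. The only discrepancy between $Q_{GR}$ and $\mathrm{s}$ is that the former implements diffeomorphisms through the \emph{covariant} Lie derivative $L^\omega_\xi=[\iota_\xi,d_\omega]$ and the curvature contraction $\iota_\xi F_\omega$, while the latter uses the naive $L_\xi=[\iota_\xi,d]$. Since these differ by an internal gauge transformation of parameter $\iota_\xi\omega$, I would define the base map $\Psi\colon F_{GR}\to F_{PP}$ by $e'=e$, $\omega'=\omega$, $\xi'=\xi$ and $\chi'=\chi+\iota_\xi\omega$, and take $\Phi\coloneqq T^*[-1]\Psi$. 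Being a cotangent lift, $\Phi$ is automatically a graded symplectomorphism, and it is conveniently encoded by the generating function $\int_M \langle e^\dag{}',e\rangle+\langle\omega^\dag{}',\omega\rangle+\langle\xi^\dag{}',\xi\rangle+\langle\chi^\dag{}',\chi+\iota_\xi\omega\rangle$; all that then remains is to verify $\Phi^*S_{PP}=S_{GR}$.

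Equivalently, since $\Phi$ preserves the symplectic form, it suffices to check that the cohomological vector fields are intertwined, $\Phi_*Q_{GR}=\check{s}$, and I would do this sector by sector on the base coordinates. On $e$ this is immediate from $L^\omega_\xi e=L_\xi e+[\iota_\xi\omega,e]$, which turns $L^\omega_\xi e+[\chi,e]$ into $L_\xi e+[\chi',e]$; on the connection the relevant identity (in the present odd-ghost conventions) is $\iota_\xi F_\omega=L_\xi\omega+d_\omega\iota_\xi\omega$, so that $\iota_\xi F_\omega+d_\omega\chi$ becomes $L_\xi\omega+d_\omega\chi'$; and the $\xi$-sector is unchanged because $\xi'=\xi$. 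Note that the \emph{same} sign $+\iota_\xi\omega$ is forced by both the $e$- and $\omega$-sectors precisely because $\xi$ is odd, so this is a genuine consistency check on the conventions. At the same time I must track the antifields: the Jacobian of the field-dependent shift $\chi'=\chi+\iota_\xi\omega$ is not the identity, so $\chi^\dag{}'$ feeds back into the transformations of $\omega^\dag$ and $\xi^\dag$ (schematically $\omega^\dag=\omega^\dag{}'+\iota_\xi\chi^\dag{}'$ and $\xi^\dag=\xi^\dag{}'+\omega\wedge\chi^\dag{}'$, up to signs), while $e^\dag$ and $\chi^\dag$ map trivially.

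The crux is the ghost sector. Here I would compute $Q_{GR}(\chi')=Q_{GR}(\chi)+Q_{GR}(\iota_\xi\omega)$ and compare with $\Phi^*(\mathrm{s}\,\chi')=L_\xi(\chi+\iota_\xi\omega)+\tfrac12[\chi+\iota_\xi\omega,\chi+\iota_\xi\omega]$. The two agree only because of the curvature term $-\tfrac12\iota_\xi\iota_\xi F_\omega$ built into $Q_{GR}(\chi)$: expanding $Q_{GR}(\iota_\xi\omega)$ produces a compensating $\iota_\xi\iota_\xi F_\omega$, and the remaining $\iota_\xi\iota_\xi\,d\omega$ and $\iota_\xi\iota_\xi[\omega,\omega]$ contributions must recombine, through the odd-ghost Cartan identities (notably $[L_\xi,\iota_\xi]=\iota_{[\xi,\xi]}$ and $\iota_\xi^2\neq0$, valid since $\xi$ is odd), into exactly the $\tfrac12[\iota_\xi\omega,\iota_\xi\omega]$ and $L_\xi\iota_\xi\omega$ terms coming from $\mathrm{s}\,\chi'$. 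This delicate sign bookkeeping for the odd diffeomorphism ghost is where I expect the real work to lie.

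Finally, I would assemble the sectors to conclude $\Phi^*S_{PP}(\Lambda)=S_{GR}(\Lambda)$: the classical part $S^0_{GR}(\Lambda)$, and in particular the cosmological term $\tfrac{\Lambda}{3}\int_M\mathrm{Tr}\,e^3$, is untouched because $\Psi$ fixes $e$ and $\omega$, while the antifield-linear terms $\langle\phi^\dag,Q\phi\rangle$ reorganise into those of $S_{PP}$ precisely thanks to the antifield Jacobian recorded above. The main obstacle is therefore not conceptual but computational: keeping strict control of the graded signs attached to the odd ghost $\xi$ throughout the $\chi$-sector and the antifield redefinitions, so that all curvature and bracket contributions cancel exactly, and not merely up to $d$-exact or on-shell remainders.
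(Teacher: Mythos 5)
Your proposal is correct and takes essentially the same approach as the paper: the paper realises the equivalence by the very same symplectomorphism that fixes $(e,\omega,\xi)$ and shifts the internal ghost by $\iota_\xi\omega$, recorded there via the generating function $F=\int_M \chi^\dag(\iota_{\xi} \omega' - \chi') + \iota_{\xi}\xi^\dag{}' - e^\dag e' - \omega^\dag \omega'$, which is exactly the cotangent lift you describe (your sign $\chi'=\chi+\iota_\xi\omega$ differs only by the trivial redefinition $\chi\mapsto-\chi$ that the paper's own footnote already flags). For the verification itself the paper simply defers to \cite{CS2}, noting that the four-dimensional computation carries over to three dimensions, so your sector-by-sector intertwining check is precisely the content of that reference.
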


The proposition was proven for 4 space-time dimensions but it carries over in one less dimension without obstructions. If we select charts in $\mathcal{F}_{PP}$ and $\mathcal{F}_{GR}$ we can express the symplectomorphism in terms of the generating function\footnote{Observe that in \cite{CS2} the result differs by a trivial redefinition $c\mapsto -c$.}
\begin{equation}
F[\chi^\dag,\xi^\dag,e^\dag,\omega^\dag,e',\omega',\xi',\chi']\coloneqq \int_M \chi^\dag(\iota_{\xi} \omega' - \chi') + \iota_{\xi}\xi^\dag{}' - e^\dag e' - \omega^\dag \omega'
\end{equation}
and deduce the transformation rules as 
\begin{equation}\label{Hamiltongen}
p= - (-1)^{|q|}\pard{G}{q};\ \ \ \ Q =  (-1)^{|P|}\pard{G}{P},
\end{equation}
where $q=(\chi^\dag,\xi,e,\omega^\dag),\ P=(e^\dag{}',\omega',\xi^\dag{}',\chi')$.

We will see how the non covariant theory will turn out to be a simplifying intermediate step. In what follows we will indeed prove that there is the sequence of strong equivalences $\mathfrak{F}_{BF}\longrightarrow\mathfrak{F}_{PP}\longrightarrow\mathfrak{F}_{GR}$, that can be composed to yield the desired strong equivalence between Palatini gravity and BF theory in 3 dimensions.

\subsection{Strong Equivalence}
\begin{theorem}
The BV theories $\mathfrak{F}_{BF}$ and $\mathfrak{F}_{PP}$ are strongly equivalent. The symplectomorphism between their spaces of fields is given by the generating function
\begin{align}\label{GenFunPP}
G=& (-1)^{|\mathbb{P}|}\left[\mathbb{P} e^{-\iota_\xi}\mathbb{q}\right]^{top}\\\notag
=&-B^\dag\left(e - \iota_\xi\omega^\dag + \frac12\iota_\xi^2\chi^\dag\right) - \tau^\dag\left(-\iota_\xi e + \frac12\iota_\xi^2 \omega^\dag- \frac16\iota_\xi^3 \chi^\dag\right) - A\left(\omega^\dag - \iota_\xi \chi^\dag\right) - c\chi^\dag
\end{align}
where $\mathbb{P}=(B^\dag,\tau^\dag, A,c)$ and $\mathbb{q}=(e,\omega^\dag, \chi^\dag)$.
\end{theorem}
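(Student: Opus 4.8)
The plan is to read \eqref{GenFunPP} as a type-I generating function for a change of polarization between the two shifted cotangent bundles $\mathcal{F}_{BF}=T^*[-1](\cdots)$ and $\mathcal{F}_{PP}=T^*[-1]F_{PP}$. Since any transformation produced by a generating function on a cotangent bundle is automatically a symplectomorphism, preservation of $\Omega_{BF}$ and $\Omega_{PP}$ comes essentially for free, once I check that $G$ carries the correct total degree (so that $\Phi$ is a morphism of graded manifolds) and that the implicit relations are globally solvable. The genuine content of the theorem is therefore the identity $\Phi^*S_{PP}(\Lambda)=S_{BF}(\Lambda)$. The shape of $G$ suggests working throughout in the superfield language: the multiplet $\mathbb{P}=(B^\dag,\tau^\dag,A,c)$ collects precisely the components of the BF connection superfield $\mathbb{A}=c+A+B^\dag+\tau^\dag$, while $\mathbb{q}=(e,\omega^\dag,\chi^\dag)$ is read as an inhomogeneous form of form-degrees $1,2,3$, so that $G=(-1)^{|\mathbb{P}|}[\mathbb{A}\,e^{-\iota_\xi}\mathbb{q}]^{top}$.

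First I would expand $e^{-\iota_\xi}$; because $\iota_\xi$ drops form degree by one and $\dim M=3$, the series truncates, and extracting the top component reproduces the four displayed terms. Applying the graded Hamilton--Jacobi relations \eqref{Hamiltongen}, with the components of $\mathbb{P}$ playing the role of $q$ and $(\mathbb{q},\xi)$ the role of $P$, differentiation of $G$ with respect to the components of $\mathbb{A}$ yields the conjugate BF multiplet, which reassembles into the single superfield relation $\mathbb{B}=\pm\,e^{-\iota_\xi}\mathbb{q}$; differentiation with respect to $\mathbb{q}$ and $\xi$ produces the PP fields $e^\dag{}',\omega',\chi'$ and $\xi^\dag{}'$. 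This is exactly the off-shell promotion of the on-shell dictionary $\tau=-\iota_\xi B$, $c=-\iota_\xi A$ recorded in the Classical-equivalence remark, now organised degree by degree. Invertibility of the change of polarization is manifest, since $e^{-\iota_\xi}$ is unipotent with inverse $e^{\iota_\xi}$, so $\Phi$ is a genuine graded symplectomorphism.

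The core step is then to substitute these relations into $S_{BF}(\Lambda)=\intl_M\mathrm{Tr}[\mathbb{B}F_{\mathbb{A}}+\frac{\Lambda}{3}\mathbb{B}^3]$ and recover $S_{PP}(\Lambda)$. Writing $\mathbb{B}=e^{-\iota_\xi}\mathbb{q}$ and inserting the relation between the BF and PP connection superfields, the cosmological term $\frac{\Lambda}{3}\mathbb{B}^3$ collapses, at top degree, onto $\frac{\Lambda}{3}e^3$, matching the cosmological piece of $S^0_{GR}$ inside $S_{PP}$. For the kinetic term I would use the superfield Cartan identity generated by $L_\xi=[\iota_\xi,d]$: conjugating the covariant derivative by $e^{-\iota_\xi}$ turns $F_{\mathbb{A}}$ into the PP curvature dressed by the Lie-derivative terms $L_{\xi'}e'$, $L_{\xi'}\omega'$, $L_{\xi'}\chi'$, so that $[\mathbb{B}F_{\mathbb{A}}]^{top}$ reorganises into $e\wedge F_\omega$ plus exactly the antifield pairings that constitute $S_{PP}$, including the $\frac12\iota_{[\chi',\chi']}\xi^\dag{}'$ term.

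The hard part will be the bookkeeping: controlling the $(-1)^{|\cdot|}$ factors in \eqref{Hamiltongen} together with the Koszul signs of the graded Leibniz rule, and verifying that the cross-terms produced by expanding the dressed curvature and $\mathbb{B}^3$ cancel against one another with no residue, leaving precisely the Lie-derivative and bracket structure of $\mathrm{s}$ in \eqref{Qpig}. The cleanest way to control this is to establish the conjugation formula $e^{-\iota_\xi}\,d_{\mathbb{A}}\,e^{\iota_\xi}=d_{\mathbb{A}}+L_\xi+\cdots$ at the level of superfields, which is what converts the BF cohomological vector field into $\check{s}$; once this is in hand the action identity follows by comparing components. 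Finally, composing $\Phi$ with the already-established equivalence $\mathfrak{F}_{PP}\to\mathfrak{F}_{GR}$ yields the strong equivalence $\mathfrak{F}_{BF}\to\mathfrak{F}_{GR}$ announced in the introduction.
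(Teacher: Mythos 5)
Your proposal is correct in substance, and its verification step takes a genuinely different route from the paper's. The shared backbone is the generating-function mechanism: the paper likewise applies \eqref{Hamiltongen} to $G$ and extracts the explicit component transformation \eqref{transf1}--\eqref{transf2} (note, though, that the paper's role assignment is the opposite of yours --- there $q=(\xi,\mathbb{q})$ are the old coordinates and $\mathbb{P}$ the new momenta, so with your swap of $q$ and $P$ the signs in \eqref{Hamiltongen} would come out wrong). Your superfield reading of those relations is exactly right: they reassemble into $\mathbb{B}=e^{-\iota_\xi}\left(e+\omega^\dag+\chi^\dag\right)$ and $\mathbb{A}=e^{-\iota_\xi}\left(\chi+\omega+e^\dag+\tau^\dag\right)$. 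The divergence is in how $\phi^*S_{BF}=S_{PP}$ is then checked: the paper performs a direct component computation (declared ``long but straightforward''), aided by the ad hoc contraction identities collected in the remark following its proof, and it defers the cosmological term to Lemma \ref{cosmologicallemma} and the subsequent proposition; your route --- conjugating the differential by $e^{\iota_\xi}$ and using that $\int_M e^{-\iota_\xi}\alpha=\int_M\alpha$ on a $3$-manifold, so the whole integrand can be stripped of the unipotent factor --- packages all of those identities at once and treats the $\Lambda$-term on the same footing as the kinetic term. That is arguably cleaner; what the paper's explicit formulas buy in exchange is reusability, since they are what enters the composition theorem and the pullback computations of Lemma \ref{cosmologicallemma}.

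Two caveats you should close. First, do not leave the conjugation formula as ``$d_{\mathbb{A}}+L_\xi+\cdots$'': because $\xi$ carries ghost number $1$, the double bracket $[\iota_\xi,[\iota_\xi,d]]$ does not vanish --- it is proportional to $\iota_{[\xi,\xi]}$ --- and the expansion terminates precisely at that quadratic term (by $[\iota_\xi,\iota_{[\xi,\xi]}]=0$, an identity the paper also invokes). This term is not an elidable remainder: paired with $\mathbb{q}$, and after eliminating $\tau^\dag$ through the implicit relation \eqref{implicitTAU}, it is the sole source of the term coupling $\xi^\dag{}'$ to $[\xi',\xi']$ in $S_{PP}$, i.e.\ of the $\frac12[\xi',\xi']$ in \eqref{Qpig}; if you drop it the action identity fails. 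Second, invertibility is not ``manifest'' from unipotence of $e^{-\iota_\xi}$ alone: the momentum conjugate to $\xi$ is only defined implicitly by \eqref{implicitTAU}, $\xi^\dag_a=\tau^\dag e_a+e^\dag\omega^\dag_a+\omega\chi^\dag_a$, and solving this for $\tau^\dag$ requires $e$ to be nondegenerate --- a point the paper flags explicitly and which delimits where the symplectomorphism is actually defined.
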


\begin{proof}
Observe that the set of \emph{old} coordinates is given by $q=(\xi,\mathbb{q})=(\xi,e,\omega^\dag,\chi^\dag)$, therefore, applying \eqref{Hamiltongen} we get the explicit transformation
\begin{subequations}\label{transf1}\begin{eqnarray}
B= e -\iota_\xi\omega^\dag + \frac12\iota_\xi^2\chi^\dag & B^\dag = e^\dag - \iota_\xi \tau^\dag\\
A=\omega - \iota_\xi e^\dag + \frac12\iota_\xi^2\tau^\dag & A^\dag = \omega^\dag - \iota_\xi\chi^\dag\\
c=\chi - \iota_\xi\omega + \frac12 \iota_\xi^2 e^\dag -\frac16\iota_\xi^3\tau^\dag & c^\dag = \chi^\dag
\end{eqnarray}\end{subequations}
together with
\begin{subequations}\label{transf2}\begin{align}
\tau= -\iota_\xi e + \frac12\iota_\xi^2\omega^\dag - \frac16\iota_\xi^3\chi^\dag \\ \label{implicitTAU}
\xi^\dag_a = \tau^\dag e_a + e^\dag \omega^\dag_a + \omega\chi^\dag_a
\end{align}\end{subequations}
Observe that Eq. \eqref{implicitTAU} can be solved in that $e$ is an isomorphism, but it will be easier to leave $\tau^\dag$ implicit when checking that the given transformation effectively yield the desired action-preserving symplectomorphism. Furthermore, notice that in computing $\xi^\dag_a=-(-1)^{|\xi^a|}\pard{G}{\xi^a}$we used $[\iota_\xi,\iota_{[\xi,\xi]}]=0$ several times.

Using the expression for $\xi^\dag_a$ we just found we can also compute $\iota_{\delta\xi}\xi^\dag = \iota_{\delta \xi} e \tau^\dag + \iota_{\delta\xi}\omega^\dag e^\dag + \iota_{\delta\xi}\chi^\dag \omega$, and $\iota_{[\xi,\xi]}\xi^\dag = \iota_{[\xi,\xi]}e \tau^\dag + \iota_{[\xi,\xi]}\omega^\dag e^\dag + \iota_{[\xi,\xi]}\chi^\dag \omega$ (notice the signs in commuting $\tau^\dag$ and $\iota_{\delta\xi} e$, as well as the sign in $\tau^\dag e_a \delta\xi^a = - \tau^\dag \iota_{\delta\xi}e$ owing to the fact that we consider $e_a$ as an odd object and $\iota_{\frac{\partial}{\partial x^a}}$ as an odd operator).

Then, if we denote the symplectomorphism generated in Eqts. \eqref{transf1} and \eqref{transf2} by $\phi\colon \mathcal{F}_{PP}\longrightarrow \mathcal{F}_{BF}$ we can verify with a long but straightforward computation that
\begin{equation}
\phi^*S_{BF}=S_{PP},
\end{equation}
concluding the proof.
\end{proof}

\begin{remark}
To verify that the symplectomorphism preserves the action functionals one might find a couple of identities particularly handy. In particular, from the fact that $\iota_{[\xi,\xi]} \iota_\xi \alpha = \iota_\xi\iota_{[\xi,\xi]} \alpha$, in the case $\alpha\in\Omega^{\text{top}}(M)$ and $\mathrm{dim}(M)=3$ we deduce 
$$\iota_\xi L_\xi^\omega \iota_\xi \alpha= - \frac13 d_\omega \iota_\xi^3 \alpha.$$
Similarly, from $\iota_\xi^2\iota_{[\xi,\xi]}\alpha = \iota_\xi\iota_{[\xi,\xi]}\iota_\xi\alpha= \iota_{[\xi,\xi]}\iota_\xi^2\alpha$, when $\alpha$ is a top-form we have
$$\iota_\xi^2d_\omega \iota_\xi^2 \alpha= \frac43\iota_\xi d_\omega\iota_\xi^3 \alpha.$$
Finally, one repeatedly needs to use the derivation property of the k-fold contraction with $\xi$ on $(\text{top} +k)$-forms, e.g. $0=\iota_\xi(\alpha\wedge\beta) = \iota_\xi\alpha\wedge\beta + \alpha\wedge \iota_\xi\beta$ and $\iota_\xi^2(\alpha\wedge \beta) = 2\iota_\xi\alpha\wedge\iota_\xi\beta + \iota_\xi^2\alpha \wedge\beta + \alpha\wedge\iota_\xi^2\beta$, implying $\iota_\xi^2\alpha \wedge\beta = \alpha\wedge\iota_\xi^2\beta$.
\end{remark}

Now we wold like to compose the two strong equivalences we have stated to construct a direct morphism between the covariant 3d Palatini BV-theory and BF-theory. In other words we want to complete the diagram
\begin{equation}
\xymatrix{
 \mathfrak{F}_{BF} \ar@{.>}[rr] \ar@{->}[dr]&&   \mathfrak{F}_{GR} \\
& \mathfrak{F}_{PP}\ar@{->}[ur] &
}
\end{equation}
This is clearly possible, and the next statement spells out the details.

\begin{theorem}
The BV theory for the 3-dimensional Palatini--Cartan theory of gravity is strongly equivalent to 3-dimensional BF theory.

The canonical transformation between the underlying spaces of fields is generated by the function
\begin{equation}
H=-B^\dag\left(e - \iota_\xi\omega^\dag -\frac12 \iota_\xi^2 \chi^\dag\right) - \tau^\dag\left(-\iota_\xi e +\frac12 \iota_\xi^2\omega^\dag + \frac13\iota_\xi^3 \chi^\dag\right) - A\omega^\dag - c\chi^\dag
\end{equation}
where fields are $(A,B,c,\tau)$ and their dual for BF-theory, and $(e,\xi,\omega,\chi)$ together with their duals for Palatini gravity.
\end{theorem}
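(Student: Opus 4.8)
The plan is to realise the asserted equivalence as the composite of the two strong equivalences already in hand. Write $\phi\colon\mathcal{F}_{PP}\to\mathcal{F}_{BF}$ for the symplectomorphism of the preceding Theorem, with $\phi^*S_{BF}=S_{PP}$, and let $\psi\colon\mathcal{F}_{PP}\to\mathcal{F}_{GR}$ be the symplectomorphism underlying the Proposition, with $\psi^*S_{GR}=S_{PP}$. I would first record the functoriality statement that a composite of strong equivalences is again a strong equivalence: since both maps are invertible graded symplectomorphisms, $\phi\circ\psi^{-1}\colon\mathcal{F}_{GR}\to\mathcal{F}_{BF}$ is a graded symplectomorphism, and $(\phi\circ\psi^{-1})^*S_{BF}=(\psi^{-1})^*\phi^*S_{BF}=(\psi^{-1})^*S_{PP}=(\psi^{-1})^*\psi^*S_{GR}=S_{GR}$. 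This already proves that $\mathfrak{F}_{GR}$ and $\mathfrak{F}_{BF}$ are strongly equivalent, so the remaining content of the statement is purely the identification of the generating function of $\phi\circ\psi^{-1}$ with $H$.

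To produce that generating function I would use the standard rule that the generating function of a composite canonical transformation is the graded Legendre transform of the two factors with respect to the intermediate variables. Concretely, forming $\Psi=G+F-\langle\text{pairing over the PP fields }(e',\omega',\xi',\chi')\text{ and their duals}\rangle$ and solving the stationarity conditions $\delta\Psi=0$ for these intermediate Palatini--Piguet variables, the substitution of the solution back into $\Psi$ yields the composite generating function. Because $G$ and $F$ are polynomial, and in particular linear in the PP variables one eliminates, this elimination is entirely algebraic and the resulting generating function is again of the mixed type for which the rules \eqref{Hamiltongen} apply.

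The bulk of the labour, and the main obstacle, lies in carrying out the elimination while controlling the Koszul signs of the graded Legendre transform together with the nonlinear dependence on the ghost $\xi$ through the operators $\iota_\xi^k$. The decisive contribution is the cross term $\chi^\dag\,\iota_\xi\omega'$ in $F$: it couples the intermediate field $\omega'$ to the antifield $\chi^\dag$, so that upon eliminating $\omega'$ the $\iota_\xi\chi^\dag$ corrections get redistributed, flipping the sign of the $\tfrac12\iota_\xi^2\chi^\dag$ term inside the $B^\dag$ bracket, turning the coefficient $-\tfrac16\iota_\xi^3\chi^\dag$ of the $\tau^\dag$ bracket into $+\tfrac13\iota_\xi^3\chi^\dag$, and cancelling the $-\iota_\xi\chi^\dag$ summand in the $A$ bracket --- precisely the three modifications that distinguish $H$ from $G$. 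As an independent check I would then apply \eqref{Hamiltongen} directly to $H$ and confirm that the induced transformation is the composite of the rules \eqref{transf1}--\eqref{transf2} with those generated by $F$; this verifies that no sign or coefficient has been mishandled and makes manifest that in degree zero the map reduces to the identification $B=e$ of the Remark on classical equivalence.
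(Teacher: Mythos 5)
Your proposal is correct and follows essentially the same route as the paper: the paper's proof also consists of composing the two previously established strong equivalences by the standard rule for composing generating functions, i.e.\ forming $G+F-(-1)^{|p_2|}p_2q_2$ and evaluating at the critical point in the intermediate Palatini--Piguet variables (citing \cite{CDW} and \cite{Anselmi} for this formula). Your additional remarks --- the functoriality of pullback showing the composite is a strong equivalence, and the observation that the cross term $\chi^\dag\iota_\xi\omega'$ in $F$ shifts $\omega^\dag$ by $\iota_\xi\chi^\dag$ and thereby produces exactly the three coefficient changes distinguishing $H$ from $G$ --- are consistent with, and in fact more explicit than, the paper's own argument, which leaves that computation to the reader.
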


\begin{proof}
To prove the result we use the standard formula for the composition of generating functions (for a recent account see, e.g., \cite[Section 3.3]{CDW} for the usual case and \cite[Section 2]{Anselmi} for the super case.). Consider two canonical transformations between three symplectic manifolds parametrised by, respectively, $(q_i,p_i)$. Denote the generating functions of such canonical transformations by $G(p_1,q_2)$ and $F(p_2,q_3)$, and assume that the generating functions are of degree $-1$ (this is the appropriate case for $-1$ shifted symplectic manifolds, and the procedure can be adapted to more general situations). Then we can find the generating function of the composed symplectomorphism as the critical point $(q_2,p_2)$ of the function\footnote{We will write $p_2q_2$ by simplicity, but indeed mean the sum over all components.}
$$\widetilde{H}(q_1,p_1,q_2,p_2,q_3,p_3) = G(p_1,q_2) + F(p_2,q_3) - (-1)^{|p_2|}p_2q_2$$
that is to say
$$H(p_1,q_1,p_3,q_3)\coloneqq\widetilde{H}\left(q_1,p_1,(-1)^{|p_2|}\frac{\partial\widetilde{H}}{\partial p_2},-(-1)^{|q_2|}\frac{\partial\widetilde{H}}{\partial q_2},q_3,p_3\right)$$
will generate the composite symplectomorphism.
\end{proof}

\subsection{A closer look to the cosmological term}
The pullback of the action functionals can be performed explicitly by applying the explicit formulae, and the equivalence can be verified directly. However, it is worth noting that the symplectomorphisms we found are stable under the addition of the respective cosmological terms, as follows from the following observations:

\begin{lemma}\label{cosmologicallemma}
Consider $\phi_{PP/GR}\colon \mathcal{F}_{PP/GR}\longrightarrow \mathcal{F}_{BF}$ and denote by $\mathbb{q}_i$ the i-th form in $\phi_{PP/GR}^*\left(e^{t\iota_\xi}\mathbb B\right)$. Then $\mathbb{q}_0=0$ and $\mathbb{q}_1=e$. Moreover, the function 
$$L(t)\coloneqq \intl_M \left(e^{t\iota_\xi}\mathbb B\right)^3$$
is constant. 
\end{lemma}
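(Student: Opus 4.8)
The plan is to reduce the entire statement to one structural identity: that the pullback of the $BF$ superfield along $\phi_{PP/GR}$ is the $\iota_\xi$‑exponential of the inhomogeneous gravitational field $\mathbb e\coloneqq e+\omega^\dag+\chi^\dag$, whose form‑degree components are $1,2,3$ and whose total degree is $1$. First I would simply read off from the transformation rules \eqref{transf1}--\eqref{transf2} that
$$\phi^*\tau=-\iota_\xi e+\tfrac12\iota_\xi^2\omega^\dag-\tfrac16\iota_\xi^3\chi^\dag,\quad \phi^*B=e-\iota_\xi\omega^\dag+\tfrac12\iota_\xi^2\chi^\dag,\quad \phi^*A^\dag=\omega^\dag-\iota_\xi\chi^\dag,\quad \phi^*c^\dag=\chi^\dag,$$
and recognise these four expressions as exactly the form‑degree $0,1,2,3$ parts of $e^{-\iota_\xi}\mathbb e$; this is nothing but the generating‑function kernel $\mathbb P e^{-\iota_\xi}\mathbb q$ of \eqref{GenFunPP} read off component by component. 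Hence $\phi^*\mathbb B=e^{-\iota_\xi}\mathbb e$, and since $\xi$ is a common coordinate ($\phi^*\xi=\xi$) the operator $e^{t\iota_\xi}$ commutes with $\phi^*$, giving the key identity
$$\phi^*\!\left(e^{t\iota_\xi}\mathbb B\right)=e^{t\iota_\xi}e^{-\iota_\xi}\mathbb e=e^{(t-1)\iota_\xi}\mathbb e.$$
Expanding in form degree exhibits the components $\mathbb q_i$; at the value where the exponential collapses this is just $\mathbb e$, which has no $0$‑form part and whose $1$‑form part is $e$, yielding the assertions $\mathbb q_0=0$ and $\mathbb q_1=e$.

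For the constancy of $L(t)$ I would argue by differentiation. Setting $\Psi_t\coloneqq e^{t\iota_\xi}\mathbb B$ one has $\dot\Psi_t=\iota_\xi\Psi_t$, and by the graded symmetry of the cubic pairing $\mathrm{Tr}[\,\cdot^3\,]$,
$$\dot L(t)=\intl_M 3\,\mathrm{Tr}\!\left[\Psi_t^2\wedge\iota_\xi\Psi_t\right].$$
The decisive point is that $\iota_\xi=\xi^a\iota_{\partial_a}$ is an \emph{even} derivation of the total grading (the odd ghost component $\xi^a$ times the odd contraction $\iota_{\partial_a}$) and commutes with the internal trace, so that $\iota_\xi\,\mathrm{Tr}[\Psi_t^3]=3\,\mathrm{Tr}[\Psi_t^2\wedge\iota_\xi\Psi_t]$. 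Therefore $\dot L(t)=\intl_M\iota_\xi\,\mathrm{Tr}[\Psi_t^3]$. Now $\intl_M$ extracts the form‑degree‑$3$ component, and the degree‑$3$ component of $\iota_\xi\,\mathrm{Tr}[\Psi_t^3]$ is $\iota_\xi$ of the degree‑$4$ component of $\mathrm{Tr}[\Psi_t^3]$; since $\dim M=3$ there are no $4$‑forms, so $\dot L\equiv 0$ and $L$ is constant. Note that this half of the argument never refers to $\phi$ and so holds uniformly for both $\phi_{PP}$ and $\phi_{GR}$.

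Finally I would combine the two halves to extract the payoff of the subsection, namely that the $BF$ cosmological term is carried to the gravitational one. Constancy gives $\intl_M\mathbb B^3=\intl_M(e^{\iota_\xi}\mathbb B)^3$, and pulling the right‑hand side back with $\phi^*(e^{\iota_\xi}\mathbb B)=\mathbb e$ leaves only $e^3$ in top degree, since every other product of the entries of $\mathbb e=e+\omega^\dag+\chi^\dag$ has form degree strictly larger than $3$; thus $\phi^*\intl_M\mathbb B^3=\intl_M e^3$, which is precisely the stability claimed.

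The main obstacle I anticipate is the bookkeeping of signs and gradings rather than any conceptual difficulty: one must verify carefully that $\iota_\xi$ is a genuinely even derivation commuting with $\mathrm{Tr}$, that differentiating the cube produces the factor $3$ under the graded symmetry of the pairing, and that $e^{t\iota_\xi}$ commutes with $\phi^*$ (using $\phi^*\xi=\xi$ together with $[\iota_\xi,\iota_{[\xi,\xi]}]=0$, already invoked in the proof of the previous theorem). Once these gradings are pinned down both computations are short, the dimensional vanishing of the degree‑$4$ term being the one genuinely load‑bearing input.
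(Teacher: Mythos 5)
Your treatment of the second half of the lemma (constancy of $L(t)$) is correct and is exactly the paper's argument: $\dot L(t)=\intl_M\iota_\xi\left(e^{t\iota_\xi}\mathbb{B}\right)^3=0$, because $\iota_\xi$ is an even derivation commuting with the trace and the top-degree part of $\iota_\xi(\cdot)$ would have to come from a form of degree $\dim M+1$. Likewise, your handling of the $t$-dependence (the claims $\mathbb{q}_0=0$, $\mathbb{q}_1=e$ hold at $t=1$, where the exponentials collapse) matches the paper's implicit reading, and for $\phi_{PP}$ your identity $\phi_{PP}^*\mathbb{B}=e^{-\iota_\xi}\left(e+\omega^\dag+\chi^\dag\right)$ is a genuinely cleaner repackaging of the paper's component-by-component check of \eqref{transf1}--\eqref{transf2}.

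However, there is a gap: the lemma is stated for \emph{both} maps $\phi_{PP}$ and $\phi_{GR}$, and your key identity is false for $\phi_{GR}$. The composite symplectomorphism is generated by $H$, not by $G$, and reading off its transformation rules (in the same conventions used to derive \eqref{transf1}--\eqref{transf2} from \eqref{GenFunPP}) gives
$$\phi_{GR}^*A^\dag=\omega^\dag,\qquad \phi_{GR}^*c^\dag=\chi^\dag,\qquad \phi_{GR}^*B=e-\iota_\xi\omega^\dag-\tfrac12\iota_\xi^2\chi^\dag,\qquad \phi_{GR}^*\tau=-\iota_\xi e+\tfrac12\iota_\xi^2\omega^\dag+\tfrac13\iota_\xi^3\chi^\dag,$$
whose $\chi^\dag$-coefficients differ from the $PP$ ones (note in particular the absence of the $-\iota_\xi\chi^\dag$ term in $\phi_{GR}^*A^\dag$). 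Consequently
$$\phi_{GR}^*\left(e^{\iota_\xi}\mathbb{B}\right)=e+\omega^\dag+\iota_\xi\chi^\dag+\chi^\dag\;\neq\;e+\omega^\dag+\chi^\dag,$$
so the exponential structure breaks in form degree $2$. The conclusions you need do survive — the $0$-form part vanishes because of the cancellations $-1+1=0$, $\tfrac12-1+\tfrac12=0$, $\tfrac13-\tfrac12+\tfrac16=0$, and the $1$-form part is still $e$ — but this is a separate computation (precisely the one the paper performs for each of the two maps), not a corollary of your identity. Since the discrepancy sits in form degree $2$, it does not affect the top-degree part of the cube, so your final conclusion about the cosmological terms is still valid; but as a proof of the lemma as stated, the $\phi_{GR}$ case of the first claim is missing, and the uniform shortcut you propose for it would, if taken literally, assert something false.
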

\begin{proof}
We write $e^{t\iota_\xi}\mathbb B = \sum_{i=0}^3\mathbb{B}^\xi_i$ and we can easily show that $\mathbb{B}^\xi_0=\tau + \iota_\xi B + \frac12 \iota_\xi^2 A^\dag + \frac16 \iota_\xi^3 c^\dag$, $\mathbb{B}^\xi_1 = B + \iota_\xi A^\dag + \frac12 \iota_\xi^2c^\dag$. Computing their pullback along the symplectomorphisms $\phi_{PP/GR}$ we obtain $\mathbb{q}_0=0$ and $\mathbb{q}_1=e$.

Finally, the time derivative $\dot{L}(t)$ reads
$$\dot{L}(t)=\intl_M \iota_\xi(e^{t\iota_\xi}\mathbb{B})^3 = 0$$
\end{proof}

\begin{proposition}
The symplectomorphisms $\phi_{PP/GR}$ map the cosmological term of the BV-extended BF theory to the cosmological term of General Relativity.
\end{proposition}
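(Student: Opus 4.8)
The plan is to leverage Lemma~\ref{cosmologicallemma} so as to avoid expanding the pullback of the cubic superfield term head-on. Recall that the cosmological term of BF theory is $\frac{\Lambda}{3}\int_M \mathrm{Tr}\,\mathbb{B}^3 = \frac{\Lambda}{3}L(0)$, where $\mathbb{B}^3$ denotes the top-form component of the inhomogeneous form $\mathbb{B}\wedge\mathbb{B}\wedge\mathbb{B}$, while the cosmological term of General Relativity is $\frac{\Lambda}{3}\int_M\mathrm{Tr}\,e^3$. It therefore suffices to show that $\phi_{PP/GR}^*\int_M\mathbb{B}^3 = \int_M e^3$.

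First I would use the constancy of $L(t)$ to move to a convenient value of the deformation parameter. Since $\phi_{PP/GR}$ does not depend on $t$, pullback commutes with $\frac{d}{dt}$, so $\frac{d}{dt}\phi^*L(t) = \phi^*\dot L(t) = 0$ by the Lemma, and hence $\phi^*L(0) = \phi^*L(1)$. This is exactly the point of the auxiliary family $e^{t\iota_\xi}\mathbb{B}$: at $t=0$ the pulled-back superfield has complicated lower-degree components, whereas at $t=1$ the Lemma tells us that its degree-zero component vanishes, $\mathbb{q}_0 = 0$, and its degree-one component is simply $\mathbb{q}_1 = e$.

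The key step is then a degree count. Writing $\phi^*(e^{\iota_\xi}\mathbb{B}) = \mathbb{q}_0 + \mathbb{q}_1 + \mathbb{q}_2 + \mathbb{q}_3$ with $\mathbb{q}_i$ a form of spacetime degree $i$, and using $\mathbb{q}_0 = 0$, only factors of form-degree at least one enter the cube. On a three-manifold the top-form component of the cube receives contributions only from triples of degrees summing to three, i.e.\ from $(1,1,1)$, so that $(\phi^* e^{\iota_\xi}\mathbb{B})^3\big|_{\mathrm{top}} = \mathbb{q}_1\wedge\mathbb{q}_1\wedge\mathbb{q}_1 = e^3$. Since pullback is an algebra homomorphism commuting with the wedge product and with integration, this gives $\phi^*L(1) = \int_M e^3$, and combining with $\phi^*L(0) = \phi^*L(1)$ yields $\phi^*\int_M\mathbb{B}^3 = \int_M e^3$, whence the claim after multiplying by $\frac{\Lambda}{3}$.

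The main obstacle to watch is that $\iota_\xi$ involves the diffeomorphism ghost $\xi$, which is a Palatini field, so one must make sure that $\phi^*$ genuinely commutes with $e^{t\iota_\xi}$; this holds because $\xi$ is preserved by the canonical transformation, being among the position coordinates left untouched by the generating function. One should also keep track of the odd nature of $\xi$, so that $\iota_\xi^2$ and $\iota_\xi^3$ need not vanish and the lower components of $\mathbb{B}^\xi_i$ are genuinely present; it is precisely their interplay that, at $t=1$, conspires to annihilate $\mathbb{q}_0$ and reduce $\mathbb{q}_1$ to $e$. Beyond these bookkeeping points the argument is purely formal and requires no further computation.
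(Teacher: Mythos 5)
Your proof is correct and follows essentially the same route as the paper's: use the constancy of $L(t)$ from Lemma~\ref{cosmologicallemma} to trade $L(0)$ for $L(1)$, then invoke $\mathbb{q}_0=0$, $\mathbb{q}_1=e$ so that the only contribution to the top-form component of the cube is $\mathbb{q}_1^3=e^3$. You merely spell out the degree count and the commutation of $\phi^*$ with $e^{t\iota_\xi}$ (where one should really read $e^{t\iota_\xi}\phi^*\mathbb{B}$, since $\xi$ lives only on the Palatini side), details the paper leaves implicit.
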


\begin{proof}
Since $L(t)$ is constant by Lemma \ref{cosmologicallemma}, we can replace the cosmological term $L(0)$ with $L(1)$ 
$$\intl_M \mathrm{Tr}[\Lambda \mathbb{B}^3] \leadsto \intl_M \mathrm{Tr}[\Lambda \left(e^{\iota_\xi}\mathbb B\right)^3]$$

Since by Lemma \ref{cosmologicallemma} $\mathbb{q}_0$ vanishes, it is easy to gather that the only term that contributes to $\phi^*_{PP/GR}\left(e^{t\iota_\xi}\mathbb B\right)^3$ is $\mathbb{q}_1^3=e^3$, proving the claim.
\end{proof}

\end{document}